\newtheorem{theorem}{Theorem}
\newtheorem{lemma}{Lemma}
\newtheorem{definition}{Definition}
\newtheorem{problem}{Problem}
\journal{Future Generation Computer Systems}
\begin{document}

\begin{frontmatter}



\title{Building Stable Off-chain Payment Networks }


\author{MohammadAmin Fazli, Seyed Moeen Nehzati, MohammadAmin Salarkia}

\address{Sharif University of Technology, Tehran, Iran.}

\begin{abstract}
Payment channel is a protocol which allows cryptocurrency users to route multiple transactions through  network without committing them to the main blockchain network (mainnet). This ability makes them the most prominent solution to blockchains' scalability problem. Each modification of payment channels requires a transaction on the mainnet and therefore, big transaction fees. In this paper, we assume that a set of payment transactions are given (batch or online) and we study the problem of scheduling modificiations on payment channels to route all of the transactions with minimum modification cost. 

We investigate two cost models for aforementioned problem: the step cost function in which every channel modification has a constant cost and the linear cost function in which modification costs are proportional to the amount of change. For the step cost function model, we prove impossibility results for both batch and online case. Moreover, some heuristic methods for the batch case are presented and compared. For the linear cost we propose a polynomial time algorithm using linear programming for the batch case.
\end{abstract}

\begin{keyword}
Payment Channels \sep Off-chain Payment Networks \sep Network Modification \sep Scheduling


\end{keyword}

\end{frontmatter}


\section{Introduction} \label{introduction}
Recently, blockchain-based decentralized cryptocurrencies like Bitcoin, Ethereum, and Zcash have got much popularity and attention. In these decentralized cryptocurrencies, transactions don't need to be managed by third parties \cite{Zhou2020survey}, but the problem with these currencies is their performance and scalability. Since their transaction throughput is bounded by maximum block size in the chain, they are not scalable for the high number of transactions \cite{croman2016scaling}. For example, for bitcoin, the best known bound for average transaction throughput is seven transactions per second \cite{cryptoeprint:2019:416}. This value is an order of magnitude lesser than the transaction rate in everyday payments, such as Visa cards. Therefore, it will not be possible to use bitcoin in people's daily life. Moreover, this low transaction rate has resulted in competition between users to finalize their transactions faster, which resulted in higher transaction fees.

Many solutions have been provided to solve the scalability problem in blockchain networks. Kim et al. \cite{Kim2018survey} and Zhou et al. \cite{Zhou2020survey} provide surveys on these solutions. Some of these solutions are on-chain methods, which try to solve the problem within the main blockchain, for example, by modifying some elements in the chain to increase the maximum block size. Other solutions are non-on-chain methods such as off-chain, side-chain, cross-chain, etc. These solutions use another network or chain to improve scalability outside the blockchain. One of these non-on-chain solutions, which was very promising in recent years, is payment channels. With this solution, some transactions will transfer to a temporary off-chain network to reduce the number of transactions on the main chain. Lightning network for Bitcoin and Raiden network for Ethereum are examples of such payment channel networks.

A payment channel is a cryptocurrency transaction on blockchain that deposits some money on the blockchain for exchange between two specific users \cite{sivaraman2020high}. For instance, Alice can deposit 1 BTC into a multi-signature transaction (meaning that the signatures of both parties is needed for withdrawing) for sending to Bob. Setting up this transaction follows the common mechanism of transactions on blockchain and needs to be committed on main net. From now on, Alice can send money up to 1 BTC to Bob as many times as she wants by sending signed transctions from the deposit account. Bob can close payment channel by signing the last transaction and broadcasting it into the main net. This whole process just needs two commits on blockchain. The first one for setting up the payment channel and the second one for closing it.

However, payment channels can only handle transactions between two users. So, it won't solve the scalability problem if everyone has to set up a payment channel to another user to provide fast payments to him. For this purpose, offchain networks of payment channels are proposed \cite{poon2016bitcoin}. With these networks, This is where the Lightning Network comes up. With this network, every two users can make transactions if there is a path of payment channels between them.

In this paper, we target the problem of building stable payment networks. In this problem, every channel has a specific deposit. That means it can route transactions with values less than its deposit. We call this deposit channel capacity, and it decreases with routing every transaction. Also, we assume that every node in this network has a finite capital which means the sum of channels' capacities sourcing from that node must be less than its capital. As a result, channels can't have infinite capacity. Every change in channel capacities needs a transaction commit on the mainnet, and it is costly. In the target problem of this paper, a network of payment channels and a set of transactions on the network are given. Then, we want to schedule capacity changes on payment channels to minimize the cost of commitments on the mainnet.

To evaluate different scheduling algorithms, we investigate two cost models. The first one is the linear cost model. In this case, the change cost of a channel's capacity is linearly dependent on the value of the change. The second one is the step cost model, which means the cost of capacity change is constant. Moreover, we check out two cases for transactions: the batch case and the online case. In the batch case, we assume that all the input transactions are given at time zero. In the online case, we suppose that at each time, we only know the next transaction. This paper proposes a polynomial-time algorithm to find the optimum solution for the linear cost model and the batch input case. For the step cost model and the batch input case, we prove the NP-completeness of the solution. For the step cost model and the online input case, we show no c-competitive algorithm such that \(c \leq \delta\), where \(\delta\) is the maximum degree of nodes in our network. Finally, we propose some heuristic algorithms for the step cost model and the batch input case and compare their performance in a simulation.

This section will explain payment channels and the payment network concepts in more detail and then survey the literature. In the next section, we will propose the formal statement of our problem. After that, we will state the aforementioned theoretical results in finding optimum solutions. Next, we will present our report on heuristic algorithms and their performance. We will conclude this paper by reviewing the achievements made in this paper and the possible future works.

\subsection{Payment channels}
The Payment channel can be mono-directional or bi-directional. First, let's see the mono-directional form. Imagine the case that Alice wants to open a payment channel with Bob with the capacity of 1 BTC. She has to deposit this money in a multi-signature address. All the transactions on this address need the signature of both Alice and Bob to be performed. If Alice wants to pay 0.1 BTC to Bob, she creates a transaction that gives 0.9 BTC from that specific address to Alice and gives 0.1 BTC to Bob. She signs this transaction and gives it to Bob. Sometime later, when Alice wants to pay another 0.2 BTC to Bob, she writes a transaction that gives 0.7 BTC to herself and gives 0.3 BTC to Bob. Again, she signs this transaction and gives it to Bob. Whenever Bob wants to withdraw his money, he can sign the transaction (complete the needed signatures), and then, this transaction will broadcast on the main blockchain and close the channel. Note that he can't commit any other transaction after Bob has committed the last transaction. In Figure \ref{paymentchannel}, you can see this scenario.

The problem here is that if Alice wants her 0.7 BTC, she depends on Bob's action. Thus, there must be a deadline for Bob to withdraw the money; otherwise, all the money will turn back to Alice. A transaction that gives all the money to Alice after some time will do this aim. Before Alice funds the address, Bob will write and sign this transaction and gives it to Alice. Then, Alice will fund the channel.

However, mono-directional payment channels do not look promising because they only allow one-way payments. So there is a need for bi-directional channels. In a bi-directional channel, both Alice and Bob create guarantee transactions and give them to each other. After that, they both fund the channel, for example, by \(V_a\) and \(V_b\) values. Now, if Alice wants to pay Bob \(v_1\), she can create a transaction that gives herself \(V_a - v_1\) and gives Bob \(V_b + v_1\), sign it and give it to Bob. Then, if Bob wants to give Alice \(v_2\), he can write a transaction which gives Alice \(V_a - v_1 + v_2 \) and gives himself \(V_b + v_1 - v_2\) and so on.

Since these channels (both mono-directional and bi-directional) only need two transactions on the main blockchain (one for opening and one for closing), they look very promising for solving the scalability problem of blockchain-based payments.

\begin{figure*}[h] \label{paymentchannel}
\centering
\includegraphics[width=\textwidth]{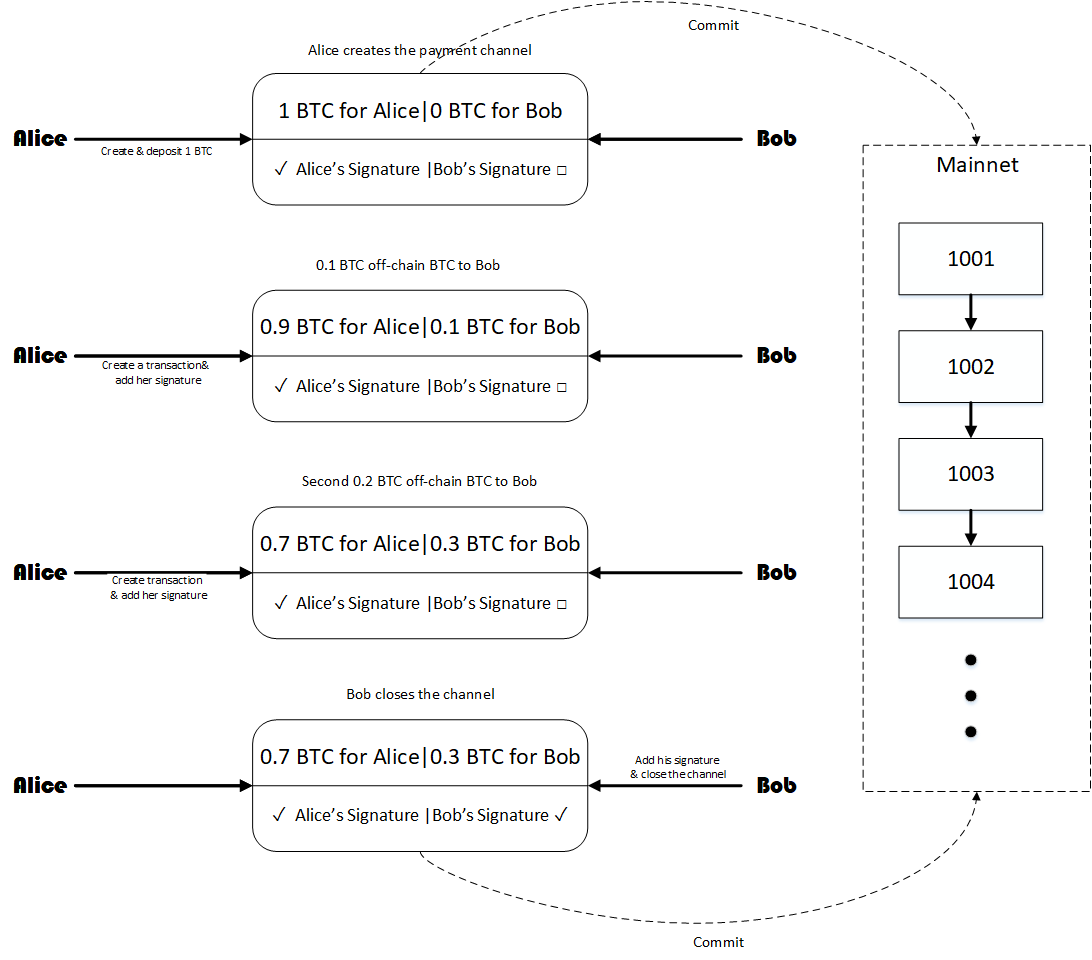}
\caption{The Payment Channel Mechanism}
\end{figure*}

\subsection{Payment Networks}
Although payment channels need just two transactions on the blockchain, if everyone has to create a payment channel with another one, it doesn't solve the scalability problem. Creating these many channels and closing them needs a lot of time and locks many coins only for future payments. Here the solution of payment networks comes up. Payment networks are graphs in which users are the nodes, and payment channels between them are edges. If some user Alice wants to pay some money to user Carol and there is a path between them in the graph, they can do their transaction only if the value of the transaction is less than the channels' capacities in this path (See Figure \ref{paymentnetwork}). 

\begin{figure*}[h] \label{paymentnetwork}
\centering
\includegraphics[width=0.7\textwidth]{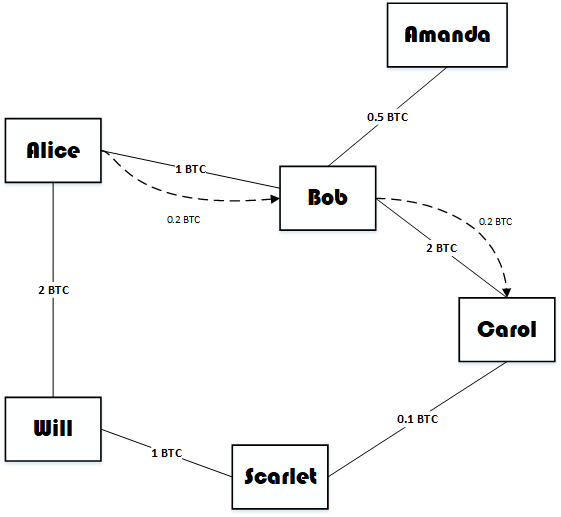}
\caption{A Payment Network; Alice wants to pay 0.2 bitcoin to Carol. This payment is impossible through the Alice-Will-Scarlet-Carol path because the channel between Scarlet and Carol has not enough capacity. However the Alice-Bob-Carlo path makes it possible.  First, Alice pays 0.2 BTC to Bob through her channel to him. Then Bob pays this money to Carol through his channel to him. }
\end{figure*}

One of the main problems with payment networks is how to incentivize intermediate nodes to route other nodes' transactions. This problem can be solved by giving a routing fee to all users in the path. With routing fees, the users in the network tend to create more channels with higher capacities to achieve more fees.

Another problem with this network is that intermediate users can steal the routing coins. Hash-locked contracts can prevent this. Hash locked contracts are transactions that can only be finalized by providing a specific proof. For instance, assume Alice wants to pay Carol, and there is a path of length 2 (Alice-Bob-Carol) between them. First, Carol has to generate a private key and a public key and give the public key to Alice. Then, Alice makes a transaction with Bob, which needs the private key to be done. Also, Bob makes a similar transaction with Carol. Now Carol provides the private key and does the transaction between herself and Bob. After that, Bob will know the private key, and he can also give the private key and do the transaction between herself and Alice.

The Lightning Network is the payment channel network for bitcoin. The Raiden Network is the Ethereum version. It is very similar to the Lightning Network except that it supports different types of transactions while the lightning network supports only bitcoin transactions. 

\subsection{Related Work}
Kim et al. \cite{Kim2018survey}, and Zhou et al. \cite{Zhou2020survey} provide surveys on solutions to the scalability problem on blockchain-based payments. As one of the most promising solutions, researchers have proposed various payment channel protocols (networks) which present how to create a payment channel \cite{avarikioti2020cerberus,decker2015fast,poon2016bitcoin,green2017bolt}. However, the current work is independent of channel protocols and applies to all such solutions.  The Lightning Network, which is the most famous payment channel, was first introduced in \cite{poon2016bitcoin}. 

Most of the optimization and algorithmic research works done by a team resided in the ETH university.  Avarikioti et al. \cite{avarikioti2018payment} study the optimal structure of the channels' network and the allocation of charges to them to maximize the profit of the payment service provider (PSP). In their other work \cite{avarikioti2018algorithmic}, proposes three optimization problems in the context of payment network design. All of those problems are variants of the {\bf General Network Design} problem. In this problem, a set of transactions, a capital, and a profit are given. The problem asks to design a network and a strategy to route the transaction through the network.  The first target problem gets a set of transactions and asks whether it is possible to route all of them from a single bidirectional channel between two users. The second problem gets a set of transactions and asks to return a graph with the minimum possible capital. The third problem receives a set of transactions, a graph, and a capital and asks whether there is a capital assignment to the graph's edges that can route all the transactions. They present hardness results and approximation algorithms for the batch case of these three problems. For the online case, they give an impossibility result for the first problem and a competitive algorithm for the second problem. 

The ETH team has also focused on some other aspects of payment channels. For example, in \cite{ Avarikioti2020RideTL}, they explore the game theory of payment networks. They assume that the users in the network are selfish players, and they compete to earn more. Then study the Nash equilibrium of the network creation game, which are the topologies that emerge from their interactions. For investigating the topology, they consider betweenness and closeness centrality as the central concept. At last, they determine the social optima for the network topology.  

There are some other works targetting payment channels and the Lightning Network. Most of these works are unrelated to this paper, but some of them are surveyed as examples.  Sivaraman et al. \cite{sivaraman2020high} Claim that shortest-path routing is not a good idea for routing transactions on payment networks (because of the limited capacity of channels, some important channels may get exhausted, and the system will stop working). Then, it provides a new routing solution of transactions to achieve high throughput.
Prihodko et al. \cite{prihodko2016flare} provide another routing algorithm of transactions in the Lightning network called Flare. Seres et al. \cite{seres2020topological} analyze the effects of the Lightning network's topology on its security and its robustness against failures and attacks. 
Herrera et al. \cite{Herrera2019Hiding} mention that the precise balance of the pair of nodes in payment networks has to be kept secret due to privacy problems. Then, they investigate the problem of hiding this balance and having working routing algorithms simultaneously and proposes an attack to reveal the balance of a channel.

 \section{The Problem Statement} \label{problem}
Here we will go with the formal definition of our problem. We are given a payment network \(G = (V, E)\), Where V stands for a set of nodes that are cryptocurrency users (more precisely, addresses) and can have payment transactions between themselves. \(E\) is the set of \(G\)'s payment channels. Each \(e \in E\), is a directed pair \(e=(u,v), u,v \in V\). These channels can be used for money transfers. The capacity of each channel is defined by a function \(c_e: T \rightarrow \mathds{R}^+\), where \(T\) is the set of times. Without loss of generality, we can assume that times are positive integers from 1 to \(n\), which means: \(T=\{1,2,...,n\}\).
\begin{definition}
  For every channel \(e \in E\) in graph \(G=(V,E)\) {\bf the capacity of channel} at time \(t\) is represented by \(c_e(t)\), where \(c_e: T \rightarrow \mathds{R}^+\) is the capacity function and \(T=\{i| 1 \leq i \leq n\}\) and \(n\) is the number of time steps. 
\end{definition}

Now, suppose that a set of transactions \(\Gamma\) is given. Without loss of generality we can assume that in each time \(i \in T\) we have exactly one input transaction \(t_i \in \Gamma\). Thus, the number of input transactions is equal to n. Each transaction is defined like this: \(t_i = < s_i, d_i, p_i, v_i >\), where \(s_i, d_i \in V\) are the source and the destination of the transactions, \(p_i\) is a path (an ordered set of channels) in \(G\) between \(s_i\) and \(d_i\) and finally, \(v_i\) is the value of transaction \(t_i\). In practical settings most of the times this path is the shortest path between \(s_i\) and \(d_i\) in \(G\) (but sometimes it is not a good idea to use the shortest path, See \cite{sivaraman2020high}). The transaction \(t_i\) can be executed in \(G\) if:
\[c_e(i) \geq v_i \; ; \; \forall e \in p_i\]. 

In this paper, we want to schedule the modifications of \(G\)'s channels' capacity in a way that can route all the input transactions with minimum cost. Each channel's capacity is constrained by its connected nodes' capital. We define the capital of each node \(u \in V\) as a function \(C_u: T \rightarrow \mathds{R}^+\), where \(C_u(t)\) represents \(u\)'s capital at time \(t\).  
\begin{definition}
For every node \(u \in V\) in graph \(G=(V,E)\), {\bf the capital of node} in time step \(t\) is represented by \(C_u(t)\), where \(C_u: T \rightarrow \mathds{R}^+\) is the capital function and \(T=\{i| 1 \leq i \leq n\}\) and \(n\) is the number of time steps. 
\end{definition}
We assume that no node can change its capital while processing payment transactions. Therefore, \(C_u\) can only be increased or decreased by transactions.  \(C_u\) decreases when \(u\) is the source of a transaction, i.e., \(u\) wants to pay some amount of money to someone. In the same way, \(C_u\) increases when \(u\) is the destination of a transaction, i.e., someone wants to pay \(u\) some money.
\begin{lemma}
With set of transactions \(\Gamma=\{<s_i, d_i, p_i, v_i> | 1 \leq i \leq n\}\), the capital of node \(u\) at time \(t\) can be represented in this way:
\[C_u(t) = C_u(0) - \sum_{j \leq t, s_j=u}v_j + \sum_{j \leq t, d_j=u}v_j\]
\end{lemma}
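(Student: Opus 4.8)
The plan is to prove the identity by induction on the time index $t$, using the per-step update rule for node capital described just above the lemma. For the base case $t=0$ there are no transactions with index $j \le 0$, so both sums are empty and the right-hand side collapses to $C_u(0)$, which matches by definition. (If one prefers to start the induction at $t=1$, the base case is the single transaction $t_1$ and follows immediately from the three-way case split below.)

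For the inductive step, assume the formula holds at time $t-1$. Between time $t-1$ and time $t$ the only event is the processing of the single transaction $t_t = \langle s_t, d_t, p_t, v_t \rangle$. By the modelling assumption that a node's capital changes only when it is the source or the destination of a transaction — intermediate nodes on the path $p_t$ merely forward the payment along their channels, so their capital is untouched — exactly one of three situations occurs: if $u = s_t$ then $C_u(t) = C_u(t-1) - v_t$; if $u = d_t$ then $C_u(t) = C_u(t-1) + v_t$; and otherwise $C_u(t) = C_u(t-1)$. In each case I substitute the inductive hypothesis for $C_u(t-1)$ and note that the quantity $v_t$ enters the sum $\sum_{j \le t,\, d_j = u} v_j$ (respectively, the sum $\sum_{j \le t,\, s_j = u} v_j$, respectively, neither sum) precisely when the condition $d_t = u$ (respectively $s_t = u$, respectively neither) holds. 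This reproduces the claimed formula with $t$ in place of $t-1$, closing the induction.

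The argument is pure bookkeeping, so I do not anticipate a real obstacle; the only point worth stating explicitly is that the model attributes capital movements solely to the endpoints of a transaction — in particular, routing fees paid to intermediate nodes are not reflected in this capital accounting — and that a degenerate transaction with $s_t = d_t$, were it permitted, would contribute $+v_t - v_t = 0$, still consistent with the formula. An equivalent non-inductive presentation is to write $C_u(t) = C_u(0) + \sum_{j=1}^{t}\bigl(C_u(j) - C_u(j-1)\bigr)$ and evaluate each telescoping increment by the same three-way case split; this is the same computation with the induction suppressed.
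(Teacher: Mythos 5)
Your proof is correct; the paper itself states this lemma without any explicit proof, treating it as an immediate consequence of the modelling assumption that a node's capital changes only when it is the source (decrease by $v_j$) or the destination (increase by $v_j$) of a transaction. Your induction on $t$ (equivalently, the telescoping-sum version) is precisely the bookkeeping that this assumption encodes, so it matches the paper's intended reasoning and simply makes it explicit.
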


Our desired scheduling plan consists of a scheme for each channel. The scheme for each channel \(e\) is defined as a sequence of channel modifications \(f_e = \{(\tau_i^e, \lambda_i^e)\}_{i=0}^{k_e}\) , where \(k_e\) is the number of modifications by our plan on channel \(e\), \(\tau_i^e \in T\) is the scheduled time for \(i\)th modification (with \(\tau_0^e = 0\)) and \(\lambda_i^e\) is the new amount which \(e\) gets after this modification. Note that its initial value \(\lambda_0^e\) equals to the initial capcity of channel \(c_e(0)\). 
\begin{definition}
{\bf A channels modification scheme} \(f_e = \{(\tau_i^e, \lambda_i^e)\}_{i=0}^{k_e}\) for edge \(e\) is defined in a way that \(\lambda_i^e\) is the value of capacity of channel \(e\) between times \(\tau_i^e\) and \(\tau_{i+1}^e\). We can represent this definition by this formula:
\[\tau_i^e \leq t < \tau_{i+1}^e \rightarrow c_e(t) = \lambda_i^e - \sum_{\tau_j^e \leq j < \tau_{j+1}^e, e \in p_j}v_j + \sum_{\tau_j^e \leq j < \tau_{j+1}^e, e \in \bar{p_j}} v_j,\]
where \(\bar{p_j}\) is the path \(p_j\) in reverse direction.
\end{definition}
This paper defines the problem of designing a plan for scheduling the input payment transactions on a given payment network as an optimization problem with a set of constraints and an objective function. The constraints are defined based on the fact that each node \(u\) must provide the capacity of its outgoing channels, and the sum of its outgoing channels capacities should be less than its capital. more formally: \(\sum_{e \in O(u)} c_e(t) \leq C_u(t)\), where \(O(u)\) is the set of edges sourced by \(u\), i.e. \(O(u) = \{(u,v)| (u,v) \in E\}\).

For the optimization objective, we consider two different functions, the linear cost function, and the step cost function. The linear cost function measures the cost of each channel modification by its amount of change in its capital.
\begin{definition}
For a given graph \(G\) and a set of transactions \(\Gamma\) and a channels modifications scheme \(f_e = \{(\tau_i^e, \lambda_i^e)\}_{i=0}^{k_e}\), we can define {\bf the linear cost function} of channels modifications as follows:
\[LC(G, \Gamma) = \sum_{e \in E(G)}\sum_{i=1}^{k_e} | \lambda_i^e - c_e(\tau_i^e) |.\]
\end{definition} 
The step cost function computes the cost as the number of changes to the capacities.
\begin{definition}
For given graph \(G\) and set of transactions \(\Gamma\) and channels modifications scheme \(f_e = \{(\tau_i^e, \lambda_i^e)\}_{i=0}^{k_e}\), we can define {\bf the step cost function} of the channels modifications as follows:
\[SC(G, \Gamma) = \sum_{e \in E(G)} k_e\]
\end{definition}
Finally, with these definitions, we can propose the formal statement of our problem.
\begin{problem}
{\bf Scheduling capacity changes on payment networks with linear cost}\\
Given a graph \(G(V, E)\) and a set of transactions \(\Gamma=\{<s_i, d_i, p_i, v_i> | 1 \leq i \leq n\}\), we want to propose \(f_e = \{(\tau_i^e, \lambda_i^e)\}_{i=0}^{k_e}\) for every \(e \in E\) in a way that minimizes \(LC(G, \Gamma)\) and meets these constraints:
	$$ c_e(i) \geq v_i ; \forall 1 \leq i \leq n , \forall e \in p_i, $$
	$$ \sum_{e \in O(u)} c_e(t) \leq C_u(t), $$
where \(c_e\) is capacity function of edge \(e\), \(C_u\) is capital function of node \(u\) and \(O(u)\) is set of outgoing channels from \(u\).
\label{problem-linear}
\end{problem}
\begin{problem}
{\bf Scheduling capacity changes on payment network with step cost}\\
Given a graph \(G(V, E)\) and a set of transactions \(\Gamma=\{<s_i, d_i, p_i, v_i> | 1 \leq i \leq n\}\), we want to propose \(f_e = \{(\tau_i^e, \lambda_i^e)\}_{i=0}^{k_e}\) for every \(e \in E\) in a way that minimizes \(SC(G, \Gamma)\) and meets these constraints:
	$$ c_e(i) \geq v_i ; \forall 1 \leq i \leq n , \forall e \in p_i, $$
	$$ \sum_{e \in O(u)} c_e(t) \leq C_u(t), $$
where \(c_e\) is capacity function of edge \(e\), \(C_u\) is capital function of node \(u\) and \(O(u)\) is set of outgoing channels from \(u\).
\label{problem-step}
\end{problem}
We face these two problems with two types for \(\Gamma\), batch and online. In the batch form, all \(t_i \in \Gamma\) are given at first, But, in the online form in every time step \(i\) we just have \(t_j \in \Gamma; 1 \leq j \leq i\) and we don't know the future transactions. 
\section{Theoretical Results}
In this section, we are going to investigate the optimal solution to the problems stated in Section \ref{problem}. For the problem \ref{problem-linear}, which used the linear cost function, we propose a linear programming algorithm for scheduling modifications in polynomial time. For the problem \ref{problem-step}, we investigate both batch and online cases. For the batch input case, we reduce a well-known fault caching problem to our problem. Moreover, as it is proved that fault caching is an NP-Complete algorithm, we can conclude that the problem \ref{problem-step} is NP-Complete either. For the online input case, we define the competitiveness, and then we prove that there are no c-competitive algorithms for our problem where $ c \leq \Delta $ and $ \Delta $ equals the maximum degree of nodes in the graph. For that purpose, for each algorithm, we devise a graph in which the cost of the algorithm will be greater than the optimal cost multiplied by $ \Delta $.

\subsection{Batch Input - Linear Cost}
The following theorem shows that this case of the problem is solvable in polynomial time by a linear programming. 
\begin{theorem}
The problem \ref{problem-linear} with batch input can be solved in polynomial time. 
\end{theorem}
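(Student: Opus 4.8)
The plan is to recast Problem~\ref{problem-linear} with batch input as a linear program of polynomial size. The first observation is that once a modification scheme $f_e$ is fixed, the whole capacity function $c_e(\cdot)$ is determined: between two consecutive modification times the capacity drifts deterministically according to the transactions routed through $e$. Concretely, set $D_e(t) = \sum_{j \le t}\bigl(-v_j\,[e\in p_j] + v_j\,[e\in\bar{p_j}]\bigr)$ with $D_e(0)=0$; this is a fixed quantity computable from the input in polynomial time. Introduce \emph{reduced capacities} $g_e(t) = c_e(t) - D_e(t-1)$, so that $g_e(0) = c_e(0)$ is fixed by the input. By the definition of a channels modification scheme, $g_e$ is piecewise constant, it jumps exactly at the scheduled modification times, and the jump of $g_e$ at $\tau_i^e$ has magnitude $|\lambda_i^e - c_e(\tau_i^e)|$ (with $c_e(\tau_i^e)$ read as the pre-modification value inherited from the previous window). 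Hence $LC(G,\Gamma) = \sum_{e\in E}\sum_{t=1}^{n} |g_e(t) - g_e(t-1)|$, the total variation of the step functions $g_e$. Conversely, any admissible choice of values $\{g_e(t)\}$ gives back a valid scheme whose modification times are the points where $g_e$ jumps and whose cost is this same total variation, so the optimum of Problem~\ref{problem-linear} equals the minimum of this objective over all admissible $g$.

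Next I would translate the constraints using $c_e(t) = g_e(t) + D_e(t-1)$. The routing requirement $c_e(i)\ge v_i$ for $e\in p_i$ becomes $g_e(i) \ge v_i - D_e(i-1)$; the implicit non-negativity $c_e(t)\ge 0$ becomes $g_e(t)\ge -D_e(t-1)$; and the capital constraint $\sum_{e\in O(u)} c_e(t)\le C_u(t)$ becomes $\sum_{e\in O(u)} g_e(t) \le C_u(t) - \sum_{e\in O(u)} D_e(t-1)$, where $C_u(t)$ is the fixed value given by Lemma~1. Each of these is a linear inequality in the variables $\{g_e(t)\}_{e\in E,\,1\le t\le n}$. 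To linearize the objective I would add variables $z_e(t)$ with $z_e(t)\ge g_e(t)-g_e(t-1)$ and $z_e(t)\ge g_e(t-1)-g_e(t)$ and minimize $\sum_{e,t} z_e(t)$; since the objective is increasing in each $z_e(t)$, at optimum $z_e(t) = |g_e(t)-g_e(t-1)|$. This produces a linear program with $O(|E|\,n)$ variables and $O((|E|+|V|)\,n)$ constraints.

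Finally I would invoke polynomial-time LP solvability (ellipsoid or interior-point method), which also detects infeasibility, and then from an optimal solution $(g^\ast, z^\ast)$ reconstruct, for each $e$, the modification times $\{t : g_e^\ast(t)\neq g_e^\ast(t-1)\}$ together with the corresponding new capacities, yielding an optimal modification scheme for the original problem.

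The step I expect to be the main obstacle is the reduction of the combinatorial problem to the LP, i.e.\ making the correspondence in the first paragraph fully precise: carefully handling the off-by-one indexing between modifications and transaction times, verifying that $c_e(\tau_i^e)$ in the cost is indeed the value carried over from the previous window, and checking that the map $f_e\mapsto g_e$ is a cost-preserving bijection onto the set of admissible reduced-capacity step functions so that no feasible schedule is lost and no spurious LP solution is introduced. Once this equivalence is established, the remaining LP modeling and the appeal to polynomial-time LP algorithms are routine.
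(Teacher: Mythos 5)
Your proposal is correct, and at its core it is the same argument as the paper's: reduce the batch linear-cost problem to a linear program and invoke polynomial-time LP solvability. The difference is in how the LP is set up, and your version is actually the more careful one. The paper writes its LP directly in terms of the scheme data $(\tau_i^e,\lambda_i^e)$ and $k_e$, so the modification times and the number of modifications appear as indices of the variables even though they are themselves part of the decision, and the objective retains absolute values; as written this is a sketch rather than a literal linear program. You sidestep both issues by indexing variables by time rather than by modification: the reduced capacities $g_e(t)=c_e(t)-D_e(t-1)$ absorb the deterministic drift from the routed transactions, a modification at time $t$ is simply a jump of $g_e$, the cost becomes the total variation $\sum_{e,t}|g_e(t)-g_e(t-1)|$, and the auxiliary variables $z_e(t)$ give the standard linearization, yielding an LP of size $O(|E|\,n)$ variables and $O((|E|+|V|)\,n)$ constraints with no decision-dependent structure. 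The cost-preserving correspondence between schemes and admissible step functions $g_e$ that you flag as the delicate point is exactly what makes this reformulation legitimate, and your outline of it (jumps occur precisely at modification times, zero-jump time steps contribute nothing, feasibility constraints translate linearly via the fixed drift terms and Lemma~1) is sound. In short: same route as the paper, but your time-indexed formulation fills in the modeling details the paper glosses over.
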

\begin{proof}
The following LP can model the problem. The first condition ensures that the scheduling plan is executable at each time. The second condition guarantees changes will not set the capacity of edges to a negative value. The third constraint ensures that the payment network can route all the incoming transactions. Th final condition guarantees the validity of nodes' capitals. 
\begin{table}
$\begin{array}{llll@{}llll}
\text{minimize}  & \sum\limits_{e \in E} \sum\limits_{i = 1}^{k_e} |\lambda_i^e - c_e(\tau_i^e)| &\\\\
\text{subject to}& c_e(i) = \lambda_i^e - \sum\limits_{\substack{\tau_{i}^e \leq j < \tau_{i+1}^e \\ e \in p_j}} v_j + \sum\limits_{\substack{\tau_{i}^e \leq j < \tau_{i+1}^e \\ e \in \bar{p}_j}} v_j\;\;\;  &\forall i \in T, e \in E\\\\
                 &                                            \lambda_i^e \geq 0, &\forall i \in T, e \in E \\\\ &
c_e(t) \geq v_i &\forall i \in T, e \in p_i \\\\&
\sum\limits_{e \in O(u)}c_e(t) \leq C_u(t) &\forall i \in T, v \in V\\\\
\end{array}$
\end{table}
As all the constraints are met and the algorithm minimizes the cost function, it finds the optimal solution. This linear programming problem can be solved in polynomial time and gives the optimal solution. 

\end{proof}

\subsection{Batch Input - Step Cost}
In this subsection, we show that the problem of scheduling capacity changes on payment networks with the step cost function is NP-complete. To this aim, we use a reduction from another NP-complete problem called the Fault Caching problem (\cite{chrobak2012caching}).  
\begin{problem} {\bf Fault Caching problem}\\
Assume that we are given set of pages \(\{pg_0, pg_1, ..., pg_{k-1}\}\) with sizes \(\{size(pg_i)\}_0^k\). Assume that we have a cache of size \(W\). In each time we can store a set of pages in the cache if the sum of their sizes is less than \(W\). A series of page requests will come in an order \(\sigma_0, \sigma_1, ..., \sigma_{n-1}\) where \(\sigma_i \in \{pg_0, pg_1, ..., pg_{k-1}\}\)  and \(\sigma_i\) is requested at time \(i\). If at time \(i, \sigma_i\) is not available in cache we say that a fault happens. The Fault Caching problem asks for a replacement policy that makes responding to \(\sigma_i\)s with at most \(F\) faults.
\label{fault-caching}
\end{problem}

\begin{theorem}
The problem of scheduling capacity changes on payment networks with the step cost function is NP-complete.
\end{theorem}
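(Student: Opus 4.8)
The plan is to prove membership in NP and then give a polynomial-time reduction from the Fault Caching problem (Problem~\ref{fault-caching}). For NP-membership, consider the decision version: given \(G\), \(\Gamma\) and a bound \(B\), does there exist a feasible schedule with \(SC(G,\Gamma)\le B\)? A certificate is a modification scheme \(f_e\) for every \(e\in E\). Because the horizon has only \(n\) steps and the times \(\tau_i^e\) are strictly increasing, each \(f_e\) has at most \(n+1\) entries, so the full schedule has size polynomial in the input, and the relevant capacity values are rational combinations of the transaction values and node capitals and hence of polynomial bit-length. Given such a certificate one verifies in polynomial time that every transaction is routable (\(c_e(i)\ge v_i\) on its path), that every node stays within its capital (\(\sum_{e\in O(u)}c_e(t)\le C_u(t)\) for all \(t\)), and that \(\sum_e k_e\le B\); so the problem lies in NP.

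For NP-hardness I would turn a Fault Caching instance into a ``star'' payment network. Take a center node \(u\) with \(C_u(0)=W\) (the cache size), and for each page \(pg_j\) a leaf \(w_j\) with a single channel \(e_j=(u,w_j)\) and \(c_{e_j}(0)=0\) (an empty cache). Each request \(\sigma_i=pg_j\) becomes a transaction of value \(size(pg_j)\) from \(u\) to \(w_j\) along \((e_j)\), immediately followed by a ``reset'' transaction of the same value from \(w_j\) back to \(u\) along the reverse of \((e_j)\), so that the pair of transactions has net effect zero on \(c_{e_j}\) and on \(C_u\) while still forcing \(c_{e_j}\ge size(pg_j)\) at the request time. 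The intended dictionary is: the channels currently at capacity \(size(pg_j)\) are exactly the pages resident in the cache; raising \(c_{e_j}\) from \(0\) to \(size(pg_j)\) is a page load (a fault); and the capital inequality \(\sum_{e\in O(u)}c_e(t)\le W\) is precisely the cache-size constraint. After observing that an optimal schedule can be assumed in canonical form — \(c_{e_j}(t)\in\{0,size(pg_j)\}\) at all times, since exceeding \(size(pg_j)\) only wastes capital at \(u\) and any intermediate value costs a modification without helping any request — a schedule and a caching policy become essentially the same object, and one sets the step-cost bound \(B\) as the appropriate function of \(F\), so that a schedule of cost \(\le B\) exists iff the request sequence can be served with \(\le F\) faults. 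NP-completeness of Fault Caching (\cite{chrobak2012caching}) then gives NP-completeness of Problem~\ref{problem-step} with batch input.

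The main obstacle, and where the argument needs real care, is the gap between what the two objectives count: \(SC\) charges for \emph{every} capacity change, whereas Fault Caching charges only for loads. Two things must be pinned down. First, routing a transaction already perturbs capacities, so the reduction must neutralise that effect; the reset transactions above (legitimate in the model because the capacity formula already allows reverse-path terms) are one way, but one has to check feasibility of each transaction and of the capital constraint at every intermediate step. Second, one must relate ``up'' modifications (loads) and ``down'' modifications (evictions) to the fault count: in the canonical form each channel is raised and lowered by exactly \(size(pg_j)\), so per channel the number of evictions never exceeds the number of loads, which bounds \(SC\) in terms of the fault count and fixes the correct \(B\); the reverse direction, extracting a policy with few faults from a low-cost schedule, uses the same canonical-form normalization. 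Carrying out both directions of the equivalence with this \(B\), together with the canonical-form lemma, is the bulk of the proof; everything else is bookkeeping.
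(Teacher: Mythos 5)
Your reduction skeleton is essentially the paper's: one channel per page, the source node's capital playing the role of the cache size \(W\), each request for \(pg_j\) becoming a transaction of value \(size(pg_j)\) over the channel of \(pg_j\), and page loads corresponding to capacity increases. You differ in three points, two of them genuine improvements: you argue NP membership (the paper only argues hardness), and you replace the paper's two-node multigraph (parallel channels \(e_1,\dots,e_k\) from \(v_1\) to \(v_2\)) by a star with ``reset'' transactions that cancel the depletion of \(c_{e_j}\) and of \(C_u\) caused by routing. That depletion is a real defect of the paper's construction, which it silently ignores: there, even cache hits drain the channel, and \(C_{v_1}(t)=W-\sum_{j\le t}size(\sigma_j)\) decreases monotonically, so the capital constraint no longer encodes a cache of fixed size \(W\). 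One caveat on your fix: the model never specifies what feasibility constraint a transaction routed \emph{against} a channel's direction must satisfy (the definition of a channels modification scheme gives the reverse-path term's effect on \(c_e\), but Problem~\ref{problem-step} only constrains \(c_e(i)\ge v_i\) for forward traversal), so you must state explicitly how the reset transactions are constrained, otherwise their feasibility at the moment \(c_{e_j}=0\) is ambiguous.

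The genuine gap is precisely the step you defer as ``setting \(B\) as the appropriate function of \(F\).'' The step cost charges for every modification, so a canonical schedule extracted from a policy with \(F\) faults costs \(\#\text{loads}+\#\text{evictions}\), which can be as large as \(2F\), whereas a schedule of cost at most \(B\) only yields a policy with at most \(B\) faults. Hence \(B=F\) breaks the forward direction whenever the capital constraint forces evictions, and \(B=2F\) breaks the backward direction (you only conclude \(\le 2F\) faults). Since the number of evictions a canonical schedule needs equals the number of loads minus the number of pages resident at the end, it is instance-dependent and not a function of \(F\); no threshold \(B=g(F)\) by itself closes the equivalence, and your ``canonical-form plus bookkeeping'' plan cannot be completed without an additional idea (e.g., a gadget forcing every policy with at most \(F\) faults to perform a fixed, efficiently computable number of evictions, or reducing from a caching variant whose cost counts loads plus evictions). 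You are in good company: the paper's own proof stumbles on the same point, asserting \(\sum_i(|d_i|+|b_i|)\le F\), i.e., treating drops as if they were counted in the fault bound, which the Fault Caching problem does not do.
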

\begin{proof}
We want to make a reduction from the Fault Caching problem to our target problem. To this aim, consider the graph \(G = (V, E)\) with \(V = \{v_1, v_2\}\) and \(E = \{e_1, e_2, ..., e_k\}\) where all \(e_i\)s are from \(v_1\) to \(v_2\) and \(C_{v_1}(0) = W\). Now we construct transactions \(\Gamma\) like this:
\[t_i = <v_1, v_2, e_{I(\sigma_i)}, size(\sigma_i)>,\]
where \(I\) gives index of the requested page.

We show that the answer to the Fault Caching problem is yes if and only if we can find a modification schedule for \(G\) and \(\Gamma\) with cost less than or equal to \(F\).

First, we prove the if part. Consider a solution of the Fault Caching problem like this: \(d_i\) pages that are dropped in time \(i\) and \(b_i\) set of pages that are brought to the cache in time \(i\). Also set \(b_0\) set of pages that are brought to the cache at time \(0\). we construct the solution \(f_e = \{\tau_i^e, \lambda_i^e\}_{i=0}^{k_e}\) in this way:
\begin{gather*}
	f_{e_i} = \{(\tau, 0)| \forall \tau, pg_i \in d_{\tau}\} \cup \{(\tau, size(pg_i)) | \forall \tau, pg_i \in b_{\tau}\} \\
	\sum_{i=0}^{n-1} (|d_i| + |b_i|) \leq F  \rightarrow \sum_{i=0}^{k-1} |f_{e_i}| \leq F
\end{gather*}

So the cost of this modification schedule is at most \(F\).

Secondly, we prove the only if part. Suppose we have solution \(f_e\) for the scheduling problem. For each time step \(\tau_i^e\)that e is not same as \(e_{I(\sigma_i)}\) we change \(\tau_i^e\) to \(\tau_{i}^e + 1\) if \(\tau_i^e + 1 \notin f_e\) and remove \((\tau_i^e, \lambda_i^e)\) from \(f_e\) if \(\tau_i^e + 1 \in f_e\). (we do this sequentially ordered by \(\tau\), starting from 1). The resulting schedule is valid and the sum of sizes of \(f_e\)s is less than its value for original schedule.

Then, at each time step \(\tau_i^e\), starting from 1, we decrease \(\lambda_i^e\) to \(size(\sigma_i)\). The resulting schedule still satisfies the question and has a cost of at most \(F\).

Now we construct the solution of Fault Caching problem in this way:
\begin{gather*}
	b_t = \{pg_i | \forall i; \exists (t, \lambda) \in f_{e_i}, \lambda - c_{e_i}(t-1) > 0\} \\
	d_t = \{pg_i | \forall i; \exists (t, \lambda) \in f_{e_i}, \lambda - c_{e_i}(t-1) < 0\},
\end{gather*}
and the reduction is complete. We know that the Fault Caching is NP-complete  and it's reduced to scheduling problem above in polynomial time, thus, we can infer that scheduling problem with Batch input and step cost is NP-complete.
\end{proof}

\subsection{Online Input - Step Cost}
In this subsection, we study the problem of scheduling capacity changes on payment network with the step cost function and online input. With this type of input, in every time step \(i\) we just have \(t_j \in \Gamma; 1 \leq j \leq i\) and we don't have any information about the future transactions. In this context, we deal with online algorithms and competetive analysis. 

In competitive analysis, an online algorithm $A$ is compared to an optimal offline algorithm. An optimal offline algorithm knows the entire request sequence in advance and can serve it with minimum cost. Given a request sequence $\sigma$, let $C_A(\sigma)$ denote the cost incurred by $A$ and let $C_{OPT}(\sigma)$ denote the cost incurred by an optimal offline algorithm $OPT$. The algorithm $A$ is called $c$-competitive if there exists a constant $a$ such that: 
$$C_A(\sigma) \leq c.C_{OPT}(\sigma) + a, $$
for all request sequences $\sigma$. Here we assume that $A$ is a deterministic online algorithm. The factor $c$ is also
called the competitive ratio of $A$ (\cite{albers1996competitive}). 
\begin{theorem}
Given a graph \(G = (V,E) \) and a set of online transactions \(\Gamma\). There is no deterministic $c$-competitive online scheduling algorithm $f = \{f_e\}_{e \in E}$ such that \(c \leq \Delta\) (maximum degree of nodes in \(G\)) for finding the minimum cost modification schedule with the step cost function \(SC(G, \Gamma)\):
\begin{gather*}
	SC(G, \Gamma) = \sum_e | f_e|
\end{gather*}
\end{theorem}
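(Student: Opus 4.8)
The plan is to prove this as a lower bound via an online adversary argument, in the spirit of the classical competitive lower bounds for paging / caching — which is natural given that the step-cost model of Problem~2 is essentially a scheduling-flavoured caching problem (compare the reduction used in the previous theorem). Fix an arbitrary deterministic online scheduler $A=\{f_e\}_{e\in E}$. For $A$ I will construct an instance $(G,\sigma)$ as follows: $G$ has one node $u$ of degree exactly $\Delta$, with neighbours $w_1,\dots,w_\Delta$ joined to $u$ by channels $e_1,\dots,e_\Delta$ (all other nodes have degree $1$, so the maximum degree of $G$ really is $\Delta$), and the initial capital $C_u(0)$ is chosen strictly smaller than $\Delta\cdot v$ for the common transaction value $v$, so that the node constraint $\sum_{e\in O(u)}c_e(t)\le C_u(t)$ can never keep all $\Delta$ channels simultaneously able to carry the next transaction. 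To respect the rigid capital dynamics of Lemma~1, the transactions I feed arrive in short net-zero groups — a payment $u\to w_i$ immediately undone by a payment $w_i\to u$ along $e_i$ — so that $C_u(t)$ stays pinned at $C_u(0)$ for all $t$ and the sole effect of a group is to require $c_{e_i}\ge v$ at that instant. Under this encoding the $\Delta$ channels of $u$ behave exactly like $\Delta$ ``pages'' contending for a cache of capacity $C_u(0)$.

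Next I run the adversary. Because $A$ is deterministic and online, before issuing the $t$-th group I can inspect $A$'s current capacities and choose a channel $e_{j}$ with $c_{e_j}(t)<v$ (one always exists, since the ``cache'' cannot hold all $\Delta$ channels), and issue the group on $e_j$. To serve it, $A$ must raise $e_j$ — and, when the capital bound is tight, also lower an already-loaded channel — so $A$ performs at least one modification in essentially every group, giving $C_A(\sigma)=\Omega(|\sigma|)$. On the other hand, the optimal offline scheduler knows $\sigma$ in advance; by the Belady-type argument for caching it only has to reconfigure $u$ roughly once per ``phase'', a phase being a maximal block of groups that together touch at most $C_u(0)/v$ distinct channels, so its cost is smaller than $C_A(\sigma)$ by a factor on the order of $\Delta$. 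Putting the two bounds together gives $C_A(\sigma)>\Delta\cdot C_{OPT}(\sigma)$ on this instance.

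To upgrade this into the stated impossibility for every $c\le\Delta$ (so that no additive constant $a$ can help), I concatenate $m$ fresh copies of the gadget sequence: $C_A$ and $\Delta\cdot C_{OPT}$ both grow linearly in $m$, and so does the gap between them, so the gap eventually exceeds any prescribed $a$. This establishes that no deterministic $c$-competitive scheduler with $c\le\Delta$ can exist.

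The adversary itself is routine; I expect the real work to be in two places. First, faithfulness of the encoding: Definition~4 and Lemma~1 leave no freedom in how capacities and capitals evolve, so I must check that the net-zero groups genuinely freeze $C_u$, that requiring $c_{e_j}\ge v$ is the only constraint a group imposes, and that every configuration into which the adversary drives $A$ (and the one $OPT$ uses) is actually feasible. Second, pinning down the constant: a forced group costs $A$ a raise plus possibly an eviction, whereas the offline optimum can amortise evictions over a whole phase, and it is exactly this bookkeeping — together with the choice of $C_u(0)$ relative to $\Delta$ — that has to be done carefully to land on the factor $\Delta$ rather than $\Delta-1$ and to keep the additive slack negligible.
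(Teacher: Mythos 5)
Your proposal is essentially the paper's own argument: the paper also builds a degree-$\Delta$ gadget whose $\Delta$ channels all draw on a single node's capital (two nodes joined by $\Delta$ parallel channels rather than your star), uses the pigeonhole on $\sum_{e\in O(u)}c_e(t)\le C_u(t)$ to let the adversary always request a channel the deterministic algorithm holds below the required value—forcing a modification every step—and then asserts a paging-style offline schedule that reconfigures only once per $\Delta$ steps, which is exactly your Belady-phase step. The only real differences are cosmetic (your net-zero transaction pairs freeze $C_u$, whereas the paper keeps all transactions one-way and scales the requested value to the shrinking capital, and you add the gadget-repetition to absorb the additive constant); the constant-counting you flag as "the real work" (factor $\Delta$ versus $\Delta-1$, and the raise-plus-lower cost of an offline fault) is likewise left unargued in the paper, so your sketch is correct to the same standard as the published proof.
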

\begin{proof}
Let $g$ be the optimal batch solution. For each deterministic algorithm $f$ we devise a network and a series of transactions such that:
\begin{gather*}
	\Delta \times SC_g(G, \Gamma) \leq SC_f(G, \Gamma)
\end{gather*}
Consider a graph \(G(V, E), V=\{v_1, v_2\}\) and \(E=\{e_1,e_2,...,e_{\Delta}\}\) where all \(e_i\)s are edges from \(v_1\) to \(v_2\). For algorithm \(f\) we have:
\begin{gather*}
	\sum_{e \in E}c_e^i \leq C_{v_1}(i); \; \forall i \in T \rightarrow \\
	\exists j: c^i_{e_j} \leq \frac{C_{v_1}(i)}{\Delta} + \epsilon; \forall i \in T, \forall \epsilon > 0
\end{gather*}
We devise \(t_i\) such that:
\begin{gather*}
	t_i = <v_1, v_2, e_j, \frac{C_{v_1}(i)}{\Delta} + \epsilon)
\end{gather*}
With this input for every \(i\) we have \((i,\frac{C_{v_1}(i)}{\Delta})\) in \(f_{e_j}\). Thus, for every time step we have at least one changes so:
\begin{gather*}
	SC_f(G, \Gamma) \geq n,
\end{gather*}
but for optimal solution, $g$, when we need an increase to \(e_j\) at time step i we decrease \(e_k\) such that \(e_k \in p_{i+\Delta}\). Thus, $g$ needs a change for every \(\Delta\) step. So:
\begin{gather*}
	\Delta \times SC_g(G, \Gamma) \leq SC_f(G, \Gamma)
\end{gather*}
\end{proof}

\section{Heuristic Algorithms}
In previous sections, we approached the problem of finding the minimum cost modification schedule for a given payment network and set of transactions in four different cases: Batch input-linear cost, Batch input-step cost, Online input-step cost, Online input-linear cost. 

For the batch input-step cost case, we proved that finding the optimum solution is NP-complete. This section will try some heuristic algorithms for this case, and we will compare their performances to find out the best heuristic. First of all, we will give a model of the problem of implementing heuristics for that. Then, we will introduce the heuristic algorithms we want to try. Finally, We will show empirical results and compare them.

\subsection{Model}
Suppose that we have a graph \(G\) with \(n\) channels and a set of transactions \(T\), over $G$, with \(m\) transactions. Again, according to the definition of the problem, each channel and node has a capacity. Node capacities cannot change, but channels' capacity can change with cost \(1\) for each modification. The sum of capacities of all channels of a specific node can't be greater than the capacity of that node. Now, assume that we have found a solution \(S\) for the problem.

The procedure of the algorithm will be in this way: At each step \(i\), $S $ will do some modifications on channels' capacities, and the cost will increase by the number of modifications. Then, transaction \(t_i\) of \(T\) will be executed, and the capacity of channels in the path of \(t_i\), the capacity of the source node, and capacity of target node will decrease by the value of \(t_i\). If limits of capacities cannot route the transaction, the solution will get some penalty, and the cost will increase by that, and \(t_i\) will be skipped. This procedure will repeat from \(i=0\) to \(i=m\).

We want to use an array to represent solution \(S\) and then try to change that array with heuristic algorithms to find better solutions. The first and the most straightforward way to represent \(S\) is to use a 2D matrix. The number in row \(i\) and column \(j\) of the matrix will be the value of modification on channel \(i\) in step \(j\). The cost of the solution will be the number of non-zero elements plus the penalty for the number of skipped transactions. 

However, there are some problems with this way of solution modeling. One problem is that the number of parameters in the solution matrix array is \(m \times n\), i.e., the number of channels times the number of transactions. This number of parameters will increase our heuristic algorithms' time and complexity and increase randomness in our algorithms.

Another problem is the way we represent the numbers in the matrix. If the numbers are absolute values of increase or decrease to capacities, those values may depend on every capacity and transaction's values. Also, these values have no bound and can change from minus infinity to plus infinity. These problems can make the progress of the algorithm hard and slow. However, if, in some way, we can use normalized coefficients from 0 to 1 instead of these absolute values, that can make our problem easier.

For finding a better way to represent a solution of an algorithm in an array, we will use some lemmas. Here we will discuss these lemmas.
\begin{lemma} 
Assume that in step \(i\) we have transaction \(t_i\) that uses path \(p_i\). Define \(N_i\) the set of all nodes in path \(p_i\). If two nodes \(v_1\) and \(v_2\) are not in \(N_i\), there is no need to change capacity of edge \(e\) between \(v_1\) and \(v_2\), in step \(i\).
\end{lemma}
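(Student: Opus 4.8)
## Proof Proposal

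The plan is to argue by a simple \emph{locality} observation about how the cost function and the constraints decompose over channels. The key point is that the step cost $SC(G,\Gamma) = \sum_{e \in E} k_e$ is separable: the cost contribution of edge $e$ depends only on $k_e$, the number of modifications scheduled on $e$, and is entirely independent of what happens on the other edges. So if we can exhibit a valid schedule that performs no modification on edge $e$ (between $v_1$ and $v_2$) at step $i$, while leaving every other edge's schedule untouched, then that schedule is no worse than the original one, which proves the claim.

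First I would recall the two families of constraints from Problem~\ref{problem-step}: the routing constraint $c_{e'}(i) \geq v_i$ for every $e' \in p_i$, and the capital constraint $\sum_{e' \in O(u)} c_{e'}(t) \leq C_u(t)$ for every node $u$ and time $t$. The routing constraint at step $i$ only involves edges on the path $p_i$; since $e = (v_1,v_2)$ (or $(v_2,v_1)$) with $v_1, v_2 \notin N_i$, we have $e \notin p_i$, so the routing constraint at step $i$ places no demand on $c_e(i)$. Next, the capital constraint at node $u$ involves $c_e(t)$ only if $e \in O(u)$, i.e. only if $u \in \{v_1, v_2\}$ (really only $u = v_1$ if channels are directed as $(v_1,v_2)$). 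But since $v_1, v_2 \notin N_i$, transaction $t_i$ neither sources from nor terminates at $v_1$ or $v_2$ and does not traverse $e$, so by Lemma~1 (the capital update formula) the capitals $C_{v_1}, C_{v_2}$ are unchanged across step $i$, and by the channel-modification-scheme definition the ``passive'' change to $c_e$ caused by routing $t_i$ is also zero (since $e \notin p_i$ and $e \notin \bar p_i$).

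The argument then is: take any optimal (or any valid) schedule $\{f_{e'}\}_{e' \in E}$, and suppose $f_e$ contains a modification $(\tau^e_j, \lambda^e_j)$ with $\tau^e_j = i$. Because neither the routing constraints nor the capital constraints at and around time $i$ constrain $c_e$ in a way that forces this modification — the value $c_e$ would otherwise hold is feasible at step $i$ — I would delete this modification (equivalently, postpone it to the next scheduled modification time on $e$, or merge it with an adjacent one). Formally, removing $(\,i,\lambda^e_j\,)$ from $f_e$ and letting the previous segment value $\lambda^e_{j-1}$ persist through time $i$ keeps $c_e(i)$ at a feasible value, because the only constraint on $c_e(i)$ is $c_e(i) \geq 0$ together with the capital bound at $v_1$, and the capital bound is slack-preserving here (capital unchanged, and we are not increasing $c_e$). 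This yields a valid schedule with $k_e$ decreased by one and all other $k_{e'}$ unchanged, hence strictly smaller step cost — contradicting optimality, and in any case showing the modification was unnecessary.

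The main obstacle, and the only place where care is needed, is the bookkeeping around the capital constraint: deleting a modification at time $i$ changes $c_e(t)$ for all $t \geq i$ up to the \emph{next} modification time on $e$, not just at time $i$, so one must check feasibility of the capital constraint $\sum_{e' \in O(v_1)} c_{e'}(t) \leq C_{v_1}(t)$ over that whole interval. The clean way to handle this is to only claim the weaker (and sufficient) statement: there exists \emph{an} optimal schedule that makes no change to $e$ at step $i$ — proven by choosing, among all optimal schedules, one that minimizes the total number of modifications occurring at times where the modified edge's endpoints are both outside the corresponding path, and then showing any such modification can be shifted forward in time without increasing cost or violating feasibility, since shifting forward only relaxes or preserves the relevant capital slack. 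I expect the author's proof to simply invoke this locality intuition directly rather than belabor the interval bookkeeping.
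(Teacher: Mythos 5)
Your overall route is the same as the paper's: postpone (or delete) the step-\(i\) modification of the off-path edge, observing that the routing constraint at step \(i\) involves only edges of \(p_i\), that \(e\notin p_i\cup\bar p_i\) since \(v_1,v_2\notin N_i\), and that \(C_{v_1},C_{v_2}\) are unchanged across step \(i\). In fact the paper offers nothing beyond the one-sentence intuition that changes to irrelevant edges can be pushed to later steps without increasing the cost, so your write-up is already more careful than the original, and you correctly predicted that.

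There is, however, one genuine gap in your feasibility argument. The claims ``the capital bound is slack-preserving here (capital unchanged, and we are not increasing \(c_e\))'' and later ``shifting forward only relaxes or preserves the relevant capital slack'' fail when the postponed modification is a \emph{decrease} and the schedule simultaneously \emph{increases} another edge out of \(v_1\) at the same step. Concretely, take \(C_{v_1}\equiv 5\), \(c_e=5\), and a sibling channel \(e''=(v_1,w)\) with \(c_{e''}=0\); an optimal schedule may at step \(i\) set \(c_e\) to \(0\) and \(c_{e''}\) to \(5\) (say \(e''\) is needed for \(t_{i+1}\)). Postponing only the decrease of \(c_e\) yields \(\sum_{e'\in O(v_1)}c_{e'}(i)=10>C_{v_1}(i)\), so the shifted schedule is infeasible; and your extremal choice does not remove this obstruction, because \(w\) may lie in \(N_i\), so the sibling increase is not an ``irrelevant'' modification and is not minimized away. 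The repair stays within your framework: since \(v_1\notin N_i\), \emph{no} edge of \(O(v_1)\) lies on \(p_i\) or \(\bar p_i\), so you may postpone \emph{all} step-\(i\) modifications of edges sourced at \(v_1\) together to step \(i+1\) (dropping any that collide with an existing step-\((i+1)\) modification). Then at time \(i\) the capacities of \(O(v_1)\) equal their time-\((i-1)\) values while \(C_{v_1}(i)=C_{v_1}(i-1)\), so the capital constraint holds; routing at step \(i\) is untouched; from step \(i+1\) onward the schedule coincides with the original; and the number of modifications does not increase (the same argument applies with \(v_2\) in place of \(v_1\) if \(e\) is directed out of \(v_2\)).
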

This lemma is quite intuitive because we can push changes to irrelevant edges to the next steps, and the resulting cost will be less than or equal to the first solution.

\begin{lemma}
If a transaction \(t_i\) in step \(i\) has value \(v_i\) and there is channel \(e\) in path \(p_i\) of \(t_i\) and capacity of \(e\) is greater than \(v_i\), there is no need to change the capacity of \(e\) in step \(i\).
\end{lemma}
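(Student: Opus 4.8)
The plan is an exchange argument. Fix any valid schedule \(S\) (in particular an optimal one) and suppose it installs a modification of \(e\) at step \(i\), say it sets \(e\)'s capacity to \(\lambda\) there, even though the capacity \(\ell\) that \(e\) carries into step \(i\) from its previous modification (already accounting for the transaction flows on \(e\) before time \(i\)) exceeds \(v_i\). I would transform \(S\) into a schedule \(S'\) of no larger step cost in which \(e\) is untouched at step \(i\): keep \(e\) at level \(\ell\) at time \(i\), and at time \(i+1\) re-install on \(e\) exactly the capacity it has in \(S\) at that time. If \(S\) already modified \(e\) at step \(i+1\), the two collapse into that one later modification and \(k_e\) strictly drops; otherwise \(k_e\) is unchanged and the modification is merely deferred one step. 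Since the step cost counts modifications, not their magnitudes, \(SC(S')\le SC(S)\) either way. Combined with the previous lemma, this shows one may restrict the heuristic search to schedules that, at step \(i\), only modify those channels of \(p_i\) whose current capacity is below \(v_i\).

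The core of the proof is checking that \(S'\) stays feasible. For every edge other than \(e\), nothing changed, and for \(e\) itself at all times \(\ge i+1\) the capacities of \(S'\) and \(S\) coincide by construction, so all transaction and capital constraints there are exactly as in \(S\). At time \(i\), the transaction constraint on \(e\) holds because \(\ell>v_i\) by hypothesis, and the transaction constraints on the other edges of \(p_i\) are untouched. The only genuinely new condition is the capital bound at the tail node \(v_1\) of \(e\), namely \(\sum_{e'\in O(v_1)} c_{e'}(i)\le C_{v_1}(i)\), and here the sign of the removed modification matters: if it was an increase, then \(c_e^{S'}(i)=\ell<\lambda=c_e^{S}(i)\), the left side only shrinks, and we are done.

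The delicate case --- and the step I expect to be the main obstacle --- is when the step-\(i\) modification of \(e\) was a decrease that \(S\) performed because \(C_{v_1}\) tightens at time \(i\) (when \(v_1=s_i\)); then \(c_e^{S'}(i)=\ell>\lambda\) and the capital bound at \(v_1\) could fail. To handle this I would argue that such a decrease is never forced onto \(e\) specifically: the required reduction \(\ell-\lambda\) of \(v_1\)'s total outgoing capacity can be realised on some other channel of \(O(v_1)\) that already carries a modification at step \(i\) or has slack, at no extra step cost, so the decrease can be relocated off \(e\). To keep the bookkeeping clean and avoid an explicit termination argument I would run the whole transformation on an \emph{extremal} optimal schedule --- one minimising \(\sum_e\sum_j \tau_j^e\) among all optimal schedules --- so that every successful "push-forward" strictly decreases this potential and the process must stop at a schedule with no such avoidable modification of \(e\); the boundary step \(i=n\), where there is no step \(i+1\) to defer into, is the same but easier, since then the useless modification is simply deleted.
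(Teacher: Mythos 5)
Your core move --- deferring the step-$i$ modification of $e$ to step $i+1$ and observing that the step cost counts modifications rather than magnitudes --- is exactly the paper's argument; in fact the paper offers only a one-sentence intuition (``we can push changes to edges that have enough capacity for now to the next steps''), so your write-up is strictly more detailed than the published proof. The delicate case you isolate, namely a \emph{decrease} of $e$ at step $i$ that the schedule performs to keep the capital bound $\sum_{e' \in O(v_1)} c_{e'}(i) \leq C_{v_1}(i)$ satisfied, is a genuine subtlety that the paper silently ignores; note, though, that when $v_1 = s_i$ and $e$ is the outgoing edge of $v_1$ on $p_i$, the routing of $t_i$ already lowers both $c_e$ and $C_{v_1}$ by $v_i$, so the tightening you worry about can only be caused by a simultaneous scheduled \emph{increase} on some other channel of $O(v_1)$. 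Your proposed fix (relocate the reduction to another outgoing channel with slack or an existing step-$i$ modification, and drive the process with an extremal/potential argument) is plausible but remains a sketch: you would still need to handle the case where every other channel in $O(v_1)$ sits at capacity $0$, in which case the right move is to defer or shrink the offending increase rather than relocate the decrease. Given that the paper treats the lemma as self-evident, your proposal is at least as complete as the original, with this one sub-claim left open.
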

The intuition behind this lemma is just like before. We can push changes to edges that have enough capacity for now to the next steps.

Now we can provide another model for our solution \(S\). We will have an array of numbers between \(0\) and \(1\) to represent this solution. Every step a transaction needs to pass through a channel, we check that if the channel's capacity is enough for that value, we let it go and do not change the channel capacity. But, if the channel's capacity is not enough, there can be two states.

First, the source node's capacity is enough, but the channel's capacity is not enough. We increase the channel's capacity as much as it can pass the transaction; then, we increase it by the amount that it can increase multiplied by a coefficient from the solution array.

Another state is when both the channel's capacity and the source node's capacity are not enough. In this state, we need to decrease the capacity of other channels connected to the source node so that its capacity will increase, and then we can act like before. Therefore, we go through all other channels connected to the source node and multiply them by a coefficient from the solution array. Whenever the node capacity is enough for the transaction to pass, we change the channel's capacity like the former state.

There is a problem with this model, and that is the length of the array. If the array's length is too long, then there would be some parameters that are not useful. If the length of the array is too short, then we lack some coefficients for our solution. To tackle this problem, we assumed that if we ran out of coefficients in the solution, we use 0 as the coefficient. Also, we consider the length of the solution array as a hyper-parameter to our heuristic model that can be tuned, and different values can be checked to find the best one.

\subsection{Algorithms}
After providing a model for our algorithm and its solution, we will test some heuristic algorithms to find the best solution and compare their performance. Here we'll explain these algorithms.
\begin{itemize}
	\item \textbf{Alg1}. This algorithm is based on Genetic Algorithms \cite{yang2020nature}. First, we create random solution arrays to the number of population. Then, in each generation, we calculate the fitness function for each solution and find the best ones as parents. Then we do a crossover between those selected solutions and do some random mutations on resulting arrays. The resulting arrays after crossover and mutation plus the parent arrays will form the next population. This process will repeat until the algorithm converges or the maximum number of generations exceeds.
	\item \textbf{Alg2}. This algorithm is based on Random Hill Climbing. At each step, we move our solution to a new random state. That means we increase or decrease some random elements of our array by a constant step. After each move, we calculate the fitness function of our array, and if it has improved, we stay; otherwise, we step back to the previous solution. This process repeats until the maximum number of steps exceeds.
	\item \textbf{Alg3}. This algorithm is based on Late Acceptance Hill-Climbing \cite{BURKE201770}. Researches have shown the superiority of this algorithm over other optimization methods in many application fields. It is similar to the previous algorithm, but it has a history array. After each step, it checks the solution's fitness with the previous step and with fitness in specific steps before from our history array. For example, if we are in step 52 and the history array's length is 30, our algorithm compares the fitness of the next step with the previous one and the number in the index of 22 in our history array. If it is better than any of them, it will move to the next step. Moreover, if new fitness is better than the fitness in our history, it will replace it in the history array. This process repeats until the maximum number of steps exceeds.
	\item \textbf{Alg4}. This algorithm is based on the PSO algorithm \cite{yang2020nature}. It initiates a population of swarms with a position in search space and a velocity. At each step, this algorithm updates the velocity of each particle using the PSO well-known formula depending on its current position, its best-known position, and the best-known position of the entire swarm. Then, it updates the particle's position depending on this new velocity. If this new position's fitness is better than this particle's best-known position or the entire swarm's best-known position, updates them. This process repeats until the algorithm converges or the maximum number of generations exceeds.
	\item \textbf{Alg5}. This algorithm is based on Simulated Annealing \cite{delahaye2019simulated}. It initiates the algorithm with a state and a temperature. Each step moves the state (like movements we had in Alg3 or Alg2) and calculates its fitness. Then calculates the probability of acceptance of the new state with this formula: \(e^{(old\_fitness - new\_fitness)/T}\). Then, it compares acceptance probability with a random number, and if it is greater than that, it moves to the new state. After each step, the temperature decreases by being multiplied by a constant. This process repeats until the temperature goes below some specific number.
\end{itemize}

\subsection{Empirical Results}
In this subsection, we present the results of experimenting with the heuristics mentioned above on real-world data. 
For data of nodes and channels, we used \cite{bigsun} API and crawled data from that. This data was massive, So for making it smaller, we sampled a connected graph from it.

There is no open dataset for transactions over the Lightning network. So we generated a random transaction dataset by choosing some random paths in our graph and some random values from the uniform distribution. We used the number of changes in channel capacities as our objective function. 
For each algorithm, some hyper-parameters must be tuned—for example, the length of the solution array or the coefficient of temperature in simulated annealing. We ran many simulations and tests on algorithms to find the best hyper-parameters for each one of them.

The result of growth and convergence of algorithms can be seen in Figure \ref{heuristics_chart}. From this simulation, these results can be concluded:
\begin{itemize}
	\item The best overall efficiency was obtained from Alg3, which is based on the Late Acceptance Hill Climbing, with the second-best being Alg4, which is based on the classic PSO algorithm. Late Acceptance Hill Climbing has shown to be very effective in many other applications. 
	\item Alg4 and Alg5 converge after few steps, but Alg1, Alg5, and Alg2 took much more time to converge.
	\item Alg3 is not having improvements at the first steps, but it starts at the ending steps. Maybe it depends on the length of the history array in our Late Acceptance Hill climbing algorithm
	\item All the algorithms were stable and converged.
\end{itemize}

\begin{figure}[htp]
\centering
\includegraphics[width=\textwidth]{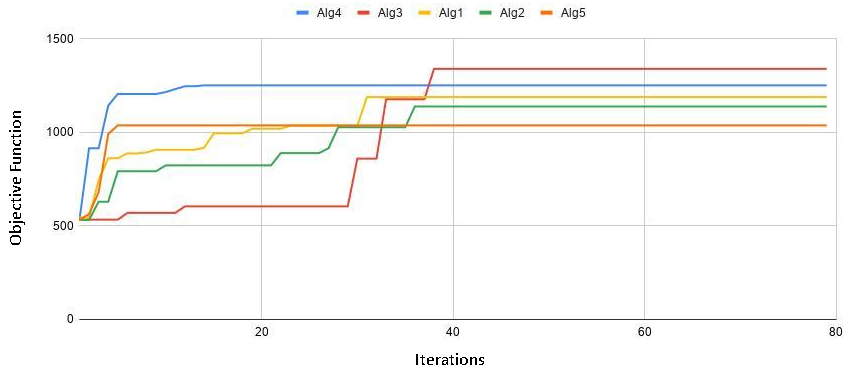}
\caption{Simulation results for the Stable network problem with 5 different algorithms.}
\label{heuristics_chart}
\end{figure}
\section{Conclusion}
We formalized the problem of capacity changes scheduling on payment networks while minimizing the cost of maintaining the network.

We have considered different types of this problem, depending on the input type of transactions and the cost function type. We proposed a polynomial-time algorithm for the batch input and the linear cost case for finding the optimal solution using linear programming. We proved the NP-completeness of finding the optimal modification scheme for the batch input and the step cost function. For the online input and step cost case, we proved no deterministic c-competitive online scheduling algorithm such that \(c \leq \Delta\), where \(\Delta\) is the maximum degree of the nodes in our payment network.

After that, we proposed some heuristic algorithms for the batch input and step cost case. We ran some simulations on those algorithms and compared them in an empirical experiment. This paper's results and algorithms can be used in all off-chain payment networks like the Lightning Network or the Raiden Network.

We have not addressed other cost functions. Moreover, we have not considered the general form of the problem, the problem in which a general function $f$ is assumed as a cost function.

For the heuristic part, we have not implemented heuristic algorithms for other cases, such as online input and step cost. Additionally, There are more heuristic algorithms we could investigate.

These are the problems that need to be addressed in the future.




\bibliographystyle{model1-num-names}
\bibliography{refs}







\end{document}